\newcommand{\ud}{\mathrm{d}}
\newcommand{\half}{{\tfrac{1}{2}}}
\newcommand{\bs}{\begin{split}}
\newcommand{\es}{\end{split}}
\theoremstyle{plain}
\newtheorem*{Theorem}{Theorem}
\newtheorem*{Lemma}{Lemma}
\begin{document}

\title{Traversable Wormholes and Classical Scalar Fields}
\author{Luke M. Butcher}
\email[]{lmb@roe.ac.uk}
\affiliation{Institute for Astronomy, University of Edinburgh, Royal Observatory, Edinburgh EH9 3HJ, United Kingdom}
\date{13 March 2015}
\pacs{04.20.-q, 04.20.Gz, 03.50.-z}

\begin{abstract}
I prove that general relativity admits no asymptotically well-behaved static spherically symmetric traversable wormholes supported by classical scalar fields and nonexotic matter. The theorem holds for all values of the scalar field curvature coupling parameter $\xi$, even though fields with $\xi>0$ are capable of violating the average null energy condition. Good asymptotic behaviour (the effective Newton's constant being positive and finite at both ends of the wormhole) can only be achieved by introducing additional exotic matter, which must itself violate the null energy condition over some region of nonzero volume. These results are insensitive to the number of scalar fields, the form of their potentials, and the coupling between the fields and the additional matter.
\end{abstract}
\maketitle

\section{Introduction}\label{intro}
We do not know whether \emph{traversable wormholes} are allowed by the laws of nature. This gap in our knowledge represents a potential challenge to the role of \emph{causality} in physics, as there appears to be no fundamental obstacle that would prevent these `bridges' of curved spacetime being used as time machines \cite{Morris88}. Consequently, there is a strong desire to settle the matter one way or the other: to either provide a believable mechanism by which a traversable wormhole could form and be maintained, or to use known laws to formulate a rigorous argument that would forbid their existence altogether. 

The starting point for any discussion of this problem is the well-known fact that traversable wormholes are precluded by the \emph{average null energy condition} (ANEC)  \cite{Morris88x, Friedman93, Hochberg98}; in other words, traversable wormholes require matter with \emph{negative} energy, as averaged along a complete null geodesic. Although this demand is not easily met, it does not rule out the existence of traversable wormholes entirely, as there are a number of plausible mechanisms by which this exotic matter could be obtained. Theoretical investigations into negative energy have typically focused on \emph{quantum} phenomena, e.g.\ \cite{Casimir48, BirrellDavies, Anderson95, Khabibullin06, Butcher14}. However, as Barcel\'o and Visser have emphasised \cite{Barcelo02}, there are also credible \emph{classical} systems which violate the various energy conditions \cite{Bekenstein74, Bekenstein75, Deser84, Flanagan96}, and these have the advantage that there is no danger that some quantum inequality \cite{Ford95, Fewster08a, Fewster08b} might constrain the negative energy and render it useless for wormhole maintenance.

Of particular interest is the classical scalar field, the dynamics of which are determined by the action
\begin{align}\label{action}
S_\varphi = -\frac{1}{2}\int\! \ud^4 x \sqrt{-g}\left(\left(\nabla \varphi\right)^2 +2V(\varphi)+ \xi R \varphi^2\right),
\end{align}
where $V$ is the potential and $\xi$ the curvature coupling parameter.\footnote{We set $c=1$ and adopt the sign conventions of Wald \cite{Wald}: $\eta_{\mu\nu}\equiv \mathrm{diag}(-1,1,1,1)$, $[\nabla_\mu,\nabla_\nu]v^\alpha\equiv R^{\alpha}_{\phantom{\alpha}\beta\mu\nu}v^\beta$, and $R_{\mu\nu}\equiv R^{\alpha}_{\phantom{\alpha}\mu\alpha\nu}$.} Given that these uncontroversial fields play a central role in the standard model of particle physics and cosmology, and that nonminimal coupling (particularly $\xi=1/6$) is physically well-motivated,\footnote{Despite the superficial simplicity of `minimal' coupling $\xi=0$, this choice is unjustified by any symmetry or special property, and is unstable to quantum corrections \cite{Ford82}. In contrast, there are good reasons to believe that $\xi=1/6$ is actually the most natural possibility. Firstly, this is a \emph{conformal} coupling, meaning that the dynamics of the massless field ($V=0$) are invariant under conformal transformations $g_{\mu\nu}\to \Omega^2 g_{\mu\nu}$; this same invariance is displayed by the electromagnetic field, and thus serves as a sensible definition of massless behaviour in curved space. Secondly, it is only for $\xi=1/6$ that the field has an energy-momentum tensor with good renormalisation properties, corresponding to the `new improved' energy-momentum of Callan, Coleman and Jackiw \cite{Callan70}.} it is perhaps surprising to learn that the scalar field energy-momentum tensor
 \begin{align}\nonumber
T^\varphi_{\mu\nu}&\equiv \frac{-2}{\sqrt{-g}}\frac{\delta S_\varphi}{\delta g^{\mu\nu}}\\\nonumber
&= \nabla_\mu \varphi \nabla_\nu \varphi - \half g_{\mu\nu} (\nabla \varphi)^2 - g_{\mu\nu}V(\varphi)\\\label{Tphi}&\quad {} + \xi \left(G_{\mu\nu}\varphi^2 - \nabla_\mu\nabla_\nu (\varphi^2) + g_{\mu\nu} \nabla^2 (\varphi^2)\right),
\end{align}
can violate all the pointwise energy-conditions (null, weak, dominant, strong) under various circumstances, and (provided $R_{\mu\nu}\ne0$ \cite{Fewster06} and $\xi >0$) can even violate the ANEC \cite{Barcelo99, Barcelo00}.

Motivated by this last result, Barcel\'o and Visser were able to construct solutions to the Einstein field equations representing static spherically symmetric traversable wormholes supported by a massless classical scalar field, firstly for conformal coupling ($\xi=1/6$) \cite{Barcelo99}, and then more generally ($\xi>0$) \cite{Barcelo00}. Although this constituted a promising step towards providing a convincing physical mechanism by which a traversable wormhole could be maintained, the solutions presented in these papers have a rather strange and conspicuous feature: at one end of the wormhole, the coupling between matter and gravity is reversed. To understand how this peculiar behaviour arises, suppose we have a scalar field and some additional matter with energy-momentum $T^\mathrm{add}_{\mu\nu}$. As usual, the Einstein field equations will be $G_{\mu\nu}= \kappa(T_{\mu\nu}^\varphi+ T^\mathrm{add}_{\mu\nu})$, where $\kappa=8\pi G$ is the gravitational coupling (Newton's constant). Notice, however, that the scalar field energy-momentum tensor (\ref{Tphi}) contains a term $\xi G_{\mu\nu}\varphi^2$; we must group this term with the Einstein tensor on the left of the field equations, otherwise the actual relationship between matter and curvature will be obscured.  The resulting field equations are $G_{\mu\nu}=\kappa_\mathrm{eff}(T^{\varphi\prime}_{\mu\nu}+ T^\mathrm{add}_{\mu\nu})$, where $T^{\varphi\prime}_{\mu\nu} \equiv T^{\varphi}_{\mu\nu} -\xi G_{\mu\nu}\varphi^2$ is the part of the scalar field energy-momentum that does not depend on $G_{\mu\nu}$, and 
\begin{align}\label{keff}
\kappa_\text{eff}\equiv \frac{1}{\kappa^{-1}-\xi \varphi^2}
\end{align}
is the \emph{effective} gravitational coupling. In each wormhole solution described by Barcel\'o and Visser, the scalar field increases monotonically from one end to the other, reaching an asymptotic limit with $\xi \varphi^2>\kappa^{-1}$. This end of the wormhole will therefore experience a negative effective gravitational coupling constant $\kappa_\text{eff}<0$; under these conditions, both $T^{\varphi\prime}_{\mu\nu}$ and $T^\mathrm{add}_{\mu\nu}$ induce spacetime curvature with the \emph{opposite} sign to usual, giving rise to a type of `anti-gravity' effect. Even if this situation does not induce any paradoxical or mathematically pathological behaviour, it is certainly clear that these wormholes do not describe a connection between two parts of our universe, nor even a connection to another universe remotely similar to our own.\footnote{The regions of negative gravitational coupling also give rise to instabilities \cite{Bronnikov01, Bronnikov06}, although it may be possible to neutralise this effect using nonexotic matter with suitable dynamical properties. We need not concern ourselves with questions of stability in this article: the unperturbed behaviour of the wormhole will prove sufficiently problematic by itself.}

Barcel\'o and Visser claim that this problem can be overcome by introducing shells of nonexotic matter: ``if we add to our system a quantity of normal matter\dots\,in two thin spherical shells, we can join smoothly\dots\,two outer asymptotic geometries, both with positive effective Newton's constants'' \cite{Barcelo99}. Unfortunately, this assertion is false. The next section will prove that ordinary matter (in thin shells or otherwise) cannot allow a classical scalar field to support a static spherically symmetric traversable wormhole such that both asymptotic regions have positive finite effective gravitational coupling. As we will see, the additional matter required will always violate the null energy condition over a region of nonzero volume. In other words, the energy-condition violations of the scalar field have been in vain, and it is always necessary to invoke yet more exotic matter to construct a static spherically symmetric traversable wormhole with good asymptotic properties. 

\section{Argument}\label{argument}
\begin{Theorem}
Let us express the metric of a static spherically symmetric traversable wormhole as
\begin{align}\label{metric}
\ud s^2 = - A(r)\ud t^2 + \ud r^2 + B(r)\left(\ud \theta^2 + \sin^2\!\theta\,\ud \phi^2 \right),
\end{align}
with
\begin{align}\label{AB>0}
A(r)\ge A_\mathrm{min}>0,\quad B(r)\ge B_\mathrm{min}>0\quad\ \forall r \in \mathbb{R},
\end{align}
and define on this spacetime a (static spherically symmetric) classical scalar field $\varphi$ and some additional matter with energy-momentum tensor $T^\mathrm{add}_{\mu\nu}$. Assume also that $B$ is not a constant, and that in both asymptotic regions the effective gravitational coupling (\ref{keff}) is positive and finite:
\begin{align}\label{keffinf}
0<\lim_{r\to\pm \infty}\kappa_\text{eff}<\infty.
\end{align}
Then in order to satisfy the Einstein field equations,
\begin{align}\label{EFE}
G_{\mu\nu}= \kappa \left(T^{\varphi}_{\mu\nu}+ T^\mathrm{add}_{\mu\nu}\right),
\end{align}
with $T^{\varphi}_{\mu\nu}$ given by (\ref{Tphi}), the additional matter must violate the null energy condition:
\begin{align}\label{NECviolate}
\exists k^\mu \quad \text{s.t.}\quad k^\mu k_\mu=0\quad\text{and}\quad T^\mathrm{add}_{\mu\nu}k^{\mu}k^{\nu} <0,
\end{align}
throughout some region of nonzero measure.
\end{Theorem}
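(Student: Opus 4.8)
The plan is to probe the null energy condition only along the \emph{radial} null directions, reducing the field equations to a one-dimensional problem. First I would contract (\ref{EFE}) with a radial null vector $k^\mu=(1/\sqrt{A},1,0,0)$; since $g_{\mu\nu}k^\mu k^\nu=0$, every term proportional to $g_{\mu\nu}$ (the potential, the kinetic trace, and the $g_{\mu\nu}\nabla^2(\varphi^2)$ piece of (\ref{Tphi})) drops out. A direct computation of the Einstein tensor for (\ref{metric}) collapses to the single flare-out expression
\[
G_{\mu\nu}k^\mu k^\nu = G^{r}{}_{r}-G^{t}{}_{t}
= -\frac{2\sqrt{A}}{\sqrt{B}}\,\frac{\ud}{\ud r}\!\left(\frac{B'}{2\sqrt{AB}}\right),
\]
while the scalar part, after the $\xi G_{\mu\nu}\varphi^2$ term is moved to the left and reabsorbed into $\kappa_\text{eff}$, contracts to
\[
T^{\varphi\prime}_{\mu\nu}k^\mu k^\nu
= (\varphi')^2 - \xi\sqrt{A}\,\frac{\ud}{\ud r}\!\left(\frac{(\varphi^2)'}{\sqrt{A}}\right).
\]
Writing (\ref{EFE}) in the effective form $G_{\mu\nu}=\kappa_\text{eff}(T^{\varphi\prime}_{\mu\nu}+T^\mathrm{add}_{\mu\nu})$ then gives $T^\mathrm{add}_{\mu\nu}k^\mu k^\nu$ explicitly in terms of $A$, $B$, $\varphi$ and their derivatives. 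It suffices to prove $\int_{-\infty}^{\infty}W(r)\,T^\mathrm{add}_{\mu\nu}k^\mu k^\nu\,\ud r<0$ for some strictly positive weight $W$, since this forces $T^\mathrm{add}_{\mu\nu}k^\mu k^\nu<0$ on a set of nonzero measure.

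The difficulty with a naive weight is that integrating the geometric term by parts produces a cross term $\sim\!\int (B'/\sqrt{AB})\,\varphi\varphi'\,\ud r$ that mixes geometry and field and carries no definite sign; this reflects the field-dependence of $\kappa_\text{eff}$. The clean way to eliminate it is to pass to the Einstein frame. Recognising the gravity-plus-scalar sector as a scalar--tensor theory with $\kappa_\text{eff}^{-1}=\kappa^{-1}-\xi\varphi^2$, I would rescale $\tilde g_{\mu\nu}=\Omega^2 g_{\mu\nu}$ with $\Omega^2=1-\kappa\xi\varphi^2=\kappa/\kappa_\text{eff}$. Provided $\Omega^2>0$, this recasts the action in Einstein--Hilbert form with a \emph{minimally} coupled scalar whose kinetic coefficient is
\[
K=\Omega^{-2}\bigl(1+6\kappa\xi^2\varphi^2\,\Omega^{-2}\bigr)>0,
\]
so the Einstein-frame scalar satisfies the NEC, $\tilde T^{\varphi}_{\mu\nu}k^\mu k^\nu=K\,(k^\mu\partial_\mu\varphi)^2\ge0$. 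The $6\kappa\xi^2\varphi^2$ term here is precisely the contribution that completes the square and kills the troublesome cross term.

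With the scalar rendered nonexotic, the remainder is the standard wormhole obstruction. The Einstein-frame areal radius $\tilde\rho=\Omega\sqrt{B}$ is positive and tends to infinity at both ends, so it attains an interior minimum -- a throat -- at which flare-out (focusing) forces the \emph{total} Einstein-frame matter to violate the NEC throughout a neighbourhood. Subtracting the NEC-respecting scalar leaves $\tilde T^\mathrm{add}_{\mu\nu}k^\mu k^\nu<0$ on that neighbourhood, and because $\tilde T^\mathrm{add}_{\mu\nu}=\Omega^{-2}T^\mathrm{add}_{\mu\nu}$ with $\Omega^2>0$, the sign of the null contraction is frame-independent: the additional matter violates (\ref{NECviolate}) over a region of nonzero measure in the physical frame as well.

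The step I expect to be the main obstacle is guaranteeing $\Omega^2=1-\kappa\xi\varphi^2>0$ \emph{throughout} the wormhole, since hypothesis (\ref{keffinf}) secures this only in the two asymptotic regions. For $\xi\le0$ there is nothing to do: $\Omega^2\ge1$ everywhere and the argument above applies verbatim. The genuine difficulty is $\xi>0$, where $\Omega^2$ vanishes on any surface where $\varphi^2=1/(\kappa\xi)$ -- the ``anti-gravity'' shells on which $\kappa_\text{eff}$ diverges and the conformal map degenerates. Here I would revert to the direct one-dimensional integral of the first paragraph: its geometric boundary term $-2\bigl[(\kappa^{-1}-\xi\varphi^2)\,B'/(2\sqrt{AB})\bigr]_{-\infty}^{\infty}$ carries the cooperating (negative) sign, because flare-out fixes $B'$ positive at $+\infty$ and negative at $-\infty$ while (\ref{keffinf}) fixes $\kappa^{-1}-\xi\varphi^2>0$ at both ends, and the manifestly negative kinetic contribution $-\int (\sqrt{B}/\sqrt{A})(\varphi')^2\,\ud r$ pushes the same way. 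Showing that the residual field terms cannot reverse this conclusion even across the anti-gravity shells -- that is, that no nonexotic $T^\mathrm{add}$ can smoothly bridge a region of negative $\kappa_\text{eff}$ while keeping both ends well-behaved -- is the crux of the theorem and the part demanding the most care.
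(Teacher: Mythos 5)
Your proposal has a genuine gap, and you have located it yourself: the case where $\kappa^{-1}-\xi\varphi^2<0$ somewhere (the paper's case (\ref{case2})) is deferred to ``a direct one-dimensional integral'' with a radial null vector and then conceded to be unresolved. Unfortunately that is precisely where the radial probe cannot work. Contracting with the longitudinal null vector, the additional energy density takes the exact form (\ref{I-}): a total-derivative piece plus $-\tfrac{3}{2}(\kappa^{-1}-\xi\varphi^2)(B'/B)^2-(\varphi')^2$, and inside any anti-gravity region the first term has the \emph{wrong} sign, so no positive weighting makes the radial integral sign-definite across the shells where $\Omega^2$ changes sign; moreover your boundary-sign claims (``flare-out fixes $B'$ positive at $+\infty$ and negative at $-\infty$'') are not among the hypotheses --- the theorem assumes only $B\ge B_\mathrm{min}>0$ and $B$ nonconstant, not asymptotic flatness, and $B'$ need not even have limits. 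The paper's missing idea is to abandon the radial direction entirely in this case and use the \emph{circulating} null vector (\ref{kcirc}): then $\rho^\mathrm{add}_\mathrm{circ}\sqrt{A}B$ is $(\kappa^{-1}-\xi\varphi^2)\sqrt{A}$ plus a total derivative whose boundary factor is itself proportional to $\kappa^{-1}-\xi\varphi^2$, so integrating between the radii $r_\pm$ where $\kappa^{-1}-\xi\varphi^2=0$ annihilates the boundary term and leaves a strictly negative integrand --- giving $J<0$ immediately. This one-line construction is exactly what bridges the anti-gravity shells you identify as the crux.

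There are also problems in the regime you do treat. Your Einstein-frame argument needs the conformal factor strictly positive (at a zero of $\Omega^2$ the kinetic coefficient $K$ diverges and the map degenerates, so the marginal case $\kappa^{-1}-\xi\varphi^2\ge0$ with zeros falls between your two regimes), and your throat argument assumes $\tilde\rho=\Omega\sqrt{B}\to\infty$ at both ends, which the hypotheses do not supply; even when a minimum exists, a degenerate throat ($\tilde\rho''=0$ there) does not yield pointwise NEC violation on a set of nonzero measure without an averaged argument. The paper avoids all of this by proving the case $\kappa^{-1}-\xi\varphi^2\ge0$ everywhere through the global weighted integral $I=\int\rho^\mathrm{add}_\mathrm{long}\,\ud r/(\sqrt{A}B)$, whose integrand is a manifestly nonpositive part plus a total derivative of $f=(\kappa^{-1}-\xi\varphi^2)B$, with the boundary contribution controlled by an elementary lemma on differentiable functions bounded from below (the appendix sequences) rather than by any assumed asymptotic flaring. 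Your conformal-frame picture is an attractive heuristic for why the nonminimal scalar is ``secretly'' NEC-respecting when $\kappa_\mathrm{eff}>0$, but as written the proposal does not prove the theorem: the decisive case is exactly the one left open.
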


Before proving this result, it is worth making a few remarks:
\begin{enumerate}
\item Coordinates can always be found to express a static spherically symmetric traversable wormhole as we have in (\ref{metric}) and (\ref{AB>0}); hence this particular form of the metric does not place any additional constraint on the applicability of the theorem.

\item The existence of $A_\mathrm{min}$ and $B_\mathrm{min}$ allows us to avoid pathological spacetimes that `pinch off' ($A\to 0^+$ or $B\to 0^+$) at either end. Similarly, that $B$ is not a constant rids us of a trivial case: an infinitely long throat that never `flares out'. Neither of these situations represent genuine wormholes. It goes without saying that both conditions are automatically satisfied if the wormhole is asymptotically flat, wherein $A\sim O(1)$ and $B\sim O(r^2)$ as $r\to \pm \infty$.

\item If the wormhole connects two different locations within the \emph{same} universe (so that there is really only one asymptotic limit) then we should of course insist that $\lim_{r\to\infty}\kappa_\text{eff}=\lim_{r\to-\infty}\kappa_\text{eff}$. We will not need this stipulation to prove the theorem. 

\item Nothing is assumed about the value of the scalar field's curvature-coupling parameter $\xi$, nor the form of its potential $V(\varphi)$. Moreover, there is no requirement that $\varphi$ obey a particular field equation; hence the proof will hold regardless of whether $\varphi$ is coupled in some way to the additional matter.

\item One does not need to include an explicit cosmological term $g_{\mu\nu}\Lambda$ in the Einstein field equations (\ref{EFE}); this supposed generalisation is already covered by the freedom to add a constant to $V(\varphi)$.

\item There is an implicit assumption that $A$, $B$ and $\varphi$ are everywhere twice differentiable, as this is required for the existence of $G_{\mu\nu}$ and $T^{\varphi}_{\mu\nu}$. Strictly speaking, then, the theorem does not apply to wormholes constructed using a cut-and-paste approach (see \cite{VisserBook} and references therein) where \emph{infinitesimal} shells of matter (delta-functions) allow for discontinuities in $B'$, $A'$ and $\varphi'$. However, it is important to realise that delta-function shells are only a mathematical idealisation: a \emph{real} shell of matter must have some \emph{nonzero} thickness, and in this case $A''$, $B''$, and $\varphi''$ will exist as required. The theorem therefore applies to any \emph{actual} configuration of matter, and in particular, to any physical realisation of a cut-and-paste wormhole.

\item The `measure' of a region $\mathcal{D}$ is $\int_\mathcal{D}\sqrt{-g} \ud^4 x$. As the system under consideration is static and spherically symmetric, we can restrict our interest to regions of the form $\mathcal{D}= \{(t,r,\theta,\phi): r\in D\}$ for some $D\subseteq \mathbb{R}$. Under this restriction, and recalling (\ref{AB>0}), it is easy to see that $\mathcal{D}$ having nonzero measure is equivalent to $\int_D \ud r \ne 0$. Thus the theorem indicates that $T^\mathrm{add}_{\mu\nu}$ does not simply violate the null energy condition on a set of isolated radii (hypersurfaces of constant $r$) but over at least one continuous range $r\in (r_1,r_2)$.
\end{enumerate}
With the remarks concluded, we now prove the proposition.

\begin{proof}

Given a metric (\ref{metric}) and scalar field $\varphi$, the Einstein field equations (\ref{EFE}) determine the energy-momentum that the addition matter must supply:
\begin{align}\nonumber
T^\mathrm{add}_{\mu\nu} &=\kappa^{-1}G_{\mu\nu}  - T^{\varphi}_{\mu\nu}\\\nonumber
&=(\kappa^{-1}- \xi \varphi^2)G_{\mu\nu} - \nabla_\mu \varphi \nabla_\nu \varphi +\half g_{\mu\nu} (\nabla \varphi)^2\\
&\quad {}+ g_{\mu\nu}V(\varphi) + \xi \left(\nabla_\mu\nabla_\nu (\varphi^2) - g_{\mu\nu} \nabla^2 (\varphi^2)\right).
\end{align}
Each null vector $k^\mu$ then defines an energy-density for the additional matter,
\begin{align}\nonumber
\rho^\mathrm{add}&\equiv T^\mathrm{add}_{\mu\nu}k^\mu k^\nu\\\nonumber
&=(\kappa^{-1}- \xi \varphi^2)R_{\mu\nu}k^\mu k^\nu + 2\xi \varphi k^\mu k^\nu\nabla_\mu\nabla_\nu \varphi,\\
&\quad {} +(2\xi -1)(k^\mu \nabla_\mu \varphi)^2.
\end{align}
Recalling that the scalar field is static and spherically symmetric $\varphi=\varphi(r)$, and noting that $\Gamma^r{}_{\mu\nu}= - \partial_r g_{\mu\nu}/2$ for the particular metric (\ref{metric}), the above formula simplifies to
\begin{align}\nonumber
\rho^\mathrm{add}&=(\kappa^{-1}- \xi \varphi^2)R_{\mu\nu}k^\mu k^\nu  +\xi \varphi \varphi' k^\mu k^\nu \partial_r g_{\mu\nu}
\\\label{rho}
&\quad {} + (k^r)^2 \left(2 \xi \varphi \varphi'' +(2\xi -1) (\varphi')^2 \right),
\end{align}
where primes denote differentiation with respect to $r$. To prove that the additional matter violates the null energy condition (\ref{NECviolate}) we will show that there is always some choice of $k^\mu$ for which $\rho^\mathrm{add}<0$ throughout some region of nonzero measure. 

At this point it is useful to distinguish between two possibilities for the scalar field:
\begin{align}\label{case1}
 &\text{either}& \kappa^{-1} -\xi \varphi^2&\ge0 \quad\text{everywhere}, \\ \label{case2}
& \text{or} &\kappa^{-1} - \xi \varphi^2&<0\quad \text{somewhere};
\end{align}
we will treat these two cases separately, using a different null vector $k^\mu$ for each. Note that the asymptotic condition (\ref{keffinf}) is equivalent to
\begin{align}\label{phiinf}
\kappa^{-1} - \xi\varphi^2(\pm\infty)>0,
\end{align}
where $\varphi(\pm\infty)\equiv \lim_{r\to \pm \infty}\varphi(r)$ are the asymptotic values of the scalar field. Thus, in the second case (\ref{case2}), the continuity of $\varphi$ guarantees the existence of (at least) two radii $r_-<r_+$ such that
\begin{align}\label{rpm}\begin{split}
\kappa^{-1} - \xi \varphi^2(r_{\pm})&=0, \\
\text{and} \quad \kappa^{-1} - \xi \varphi^2(r)&<0\quad \forall r \in (r_-,r_+).
\end{split}
\end{align}
We will make use of these radii near the end of the proof.

For now, let us focus on the first case (\ref{case1}) and consider a longitudinal null vector, pointing directly along the throat:
\begin{align}\label{klong}
k^\mu_\mathrm{long}\equiv (A^{-1/2},1,0,0).
\end{align}
The energy-density (\ref{rho}) is then
\begin{align}\nonumber
\rho^\mathrm{add}_\mathrm{long}&=(\kappa^{-1}- \xi \varphi^2)\left(\frac{R_{tt}}{A} +R_{rr}\right) -\xi \varphi \varphi^\prime \frac{A^\prime}{A}
\\&\quad {}  + 2\xi \varphi \varphi'' +(2\xi -1) (\varphi')^2.
\end{align}
As the nonzero components of the Ricci tensor are given by
\begin{align}\label{ricci}\begin{split}
\frac{R_{tt}}{A}&= \frac{A''}{2A} - \frac{1}{4}\left(\frac{A'}{A}\right)^2 + \frac{A'B'}{2AB},\\
R_{rr}&=- \frac{A''}{2A} +\frac{1}{4}\left(\frac{A'}{A}\right)^2+ \frac{1}{2}\left(\frac{B'}{B}\right)^2 - \frac{B''}{B},\\
\frac{R_{\theta\theta}}{B}&=\frac{R_{\phi\phi}}{B \sin^2 \theta}= - \frac{A'B'}{4AB}  - \frac{B''}{2B} + \frac{1}{B},
\end{split}
\end{align}
we have
\begin{align}\nonumber
\rho^\mathrm{add}_\mathrm{long}&=(\kappa^{-1}- \xi \varphi^2)\left(\frac{A'B'}{2AB}+ \frac{1}{2}\left(\frac{B'}{B}\right)^2 - \frac{B''}{B} \right)\\
&\quad {} -\xi \varphi \varphi^\prime \frac{A^\prime}{A} + 2\xi \varphi \varphi'' +(2\xi -1) (\varphi')^2,
\end{align}
and it is easy to verify that this can be rewritten as
\begin{align}\nonumber
\rho^\mathrm{add}_\mathrm{long}&=-\frac{3}{2}(\kappa^{-1}- \xi \varphi^2)\left(\frac{B'}{B}\right)^2 -  (\varphi')^2\\\label{I-}
&\quad {} + \sqrt{A} B\frac{\ud}{\ud r}\left[\frac{-1}{\sqrt{A}B^2} \frac{\ud}{\ud r} \left[(\kappa^{-1}- \xi \varphi^2) B\right]\right].
\end{align}
Now consider the following integral:
\begin{align}\nonumber
I& \equiv \int^{R_+}_{R_-}\rho^\mathrm{add}_\mathrm{long}\frac{\ud r}{\sqrt{A}B} \\\nonumber
&= -\int^{R_+}_{R_-}\left( \frac{3}{2}(\kappa^{-1}- \xi \varphi^2)\left(\frac{B'}{B}\right)^2 + (\varphi')^2 \right)\frac{\ud r}{\sqrt{A}B}\\\label{I}
&\quad {} + \left[\frac{-1}{\sqrt{A}B^2} \frac{\ud}{\ud r} \left[(\kappa^{-1}- \xi \varphi^2) B\right]\right]^{R_+}_{R_-}.
\end{align}
As a consequence of (\ref{case1}), the function 
\begin{align}
f(r)\equiv (\kappa^{-1}-\xi \varphi^2 ) B\ge0
\end{align}
is bounded from below, and it is therefore possible to send $R_{\pm}\rightarrow \pm \infty$ in such a way that the boundary term $[(-1/\sqrt{A}B^2)f'(r)]^{R_+}_{R_-}$ either tends to zero, or is negative.\footnote{To prove this result in full generality requires a little analysis, but the main idea can be explained without complication when the limits $f'(\pm\infty)\equiv \lim_{r\rightarrow \pm \infty} f'(r)$ exist. It is easy to see that $f'(\infty)<0$ contradicts the lower bound for $f$ (because $\lim_{r\rightarrow \infty} f(r) = \lim_{r\rightarrow \infty}\int^r f'(x)\ud x=-\infty$) and that $f'(-\infty)>0$ leads to the same contradiction. Thus $[f'(r)]^{+\infty}_{-\infty}= f'(\infty)-f'(-\infty)\ge0$. We then note that (\ref{AB>0})  ensures that $-1/\sqrt{A}B^2$ is negative and bounded from below. Consequently, the boundary term $[(-1/\sqrt{A}B^2)f'(r)]^{+\infty}_{-\infty}\le 0$, as required.

If the limits $f'(\pm\infty)$ do not exist, then one should instead use unbounded sequences $R_{+1}<R_{+2}<\ldots$  and $R_{-1}>R_{-2}>\ldots$ such that $f'(R_{+n})$ is either positive for all $n$, or tends to zero as $n\to \infty$, and $f'(R_{-n})$ is either negative for all $n$, or tends to zero as $n\to \infty$. These sequences, and aforementioned properties of $-1/\sqrt{A}B^2$, allow one to send $R_{\pm} \to \pm\infty$ with a boundary term that is either negative or vanishingly small. For the sake of completeness, the existence of these sequences is proved in the appendix. \label{boundary}}
As a result,
\begin{align}
I \le -\int^{\infty}_{-\infty}\!\!\left( \frac{3}{2}(\kappa^{-1}- \xi \varphi^2)\left(\frac{B'}{B}\right)^2 + (\varphi')^2 \right)\!\!\frac{\ud r}{\sqrt{A}B}.
\end{align}
Referring to (\ref{case1}) once again, we see that neither term in the integrand can ever be negative, so we must have $I\le0$. In fact, $I$ can only be zero if (i) $\kappa^{-1} - \xi \varphi^2 =0$ wherever $B'\ne0$ (which must be somewhere, as $B$ is not a constant), and (ii) $\varphi'=0$ everywhere; this requires that $\kappa^{-1} - \xi \varphi^2 =0$ everywhere, which contradicts the asymptotic condition (\ref{phiinf}). Hence $I<0$, and we conclude that there must be some region, of nonzero measure, where $\rho^\mathrm{add}_\mathrm{long}<0$.

We now turn to the second case (\ref{case2}), and consider a null vector
\begin{align}\label{kcirc}
k^\mu_\mathrm{circ}\equiv(A^{-1/2},0,0,B^{-1/2}(\sin{\theta})^{-1}),
\end{align}
which circulates around the throat, running perpendicular to the longitudinal direction. The circulating null vector defines an energy-density (\ref{rho}) as follows:
\begin{align}\nonumber
\rho_\mathrm{circ}^\mathrm{add} &= (\kappa^{-1} - \xi \varphi^2)\left(\frac{R_{tt}}{A} + \frac{R_{\phi\phi}}{B \sin^2 \theta}\right) \\\nonumber
&\quad {}+ \xi \varphi \varphi' \left( \frac{B'}{B} - \frac{A'}{A}\right)\\\nonumber
&=(\kappa^{-1} - \xi \varphi^2)\Bigg(\frac{A''}{2A}- \frac{1}{4} \left(\frac{A'}{A}\right)^2 + \frac{A'B'}{4AB} \\
&\quad {} - \frac{B''}{2B} + \frac{1}{B}\Bigg) + \xi \varphi \varphi' \left( \frac{B'}{B} - \frac{A'}{A}\right),
\end{align}
where (\ref{ricci}) was used in the last line. We then notice this can be rewritten as
\begin{align}\nonumber
\rho_\mathrm{circ}^\mathrm{add} &=\frac{1}{2\sqrt{A} B}\frac{\ud}{\ud r}\left[(\kappa^{-1} -\xi \varphi^2)\left(\frac{A' B}{\sqrt{A}} - \sqrt{A}B'\right)\right] \\ \label{J-}
&\quad {} + \frac{\kappa^{-1} - \xi \varphi^2}{B},
\end{align}
which inspires us to construct the integral
\begin{align}\nonumber
J &\equiv \int^{r_+}_{r_-} \rho^\mathrm{add}_\mathrm{circ}\sqrt{A} B \ud r\\\nonumber
& = \int^{r_+}_{r_-} (\kappa^{-1} -\xi \varphi^2) \sqrt{A} \ud r \\\label{J}
&\quad {}+ \frac{1}{2}\left[(\kappa^{-1} -\xi \varphi^2)\left(\frac{A' B}{\sqrt{A}} - \sqrt{A}B'\right)\right]^{r_+}_{r_-}.
\end{align}
As we have chosen the radii $r_{\pm}$ for the limits of integration, we see from (\ref{rpm}) that the integrand is strictly negative, and the boundary term is zero. Hence $J<0$, and there must therefore be a region $D\subseteq (r_{-},r_{+})$, with nonzero measure, throughout which $\rho^\mathrm{add}_\mathrm{circ}<0$.
\end{proof}

\section{Discussion}
Even when it violates the average null energy condition, the classical scalar field (\ref{action}) cannot support a static spherically symmetric traversable wormhole (\ref{metric}) with sensible asymptotic behaviour (\ref{keffinf}). We have shown that, in addition to the scalar field, these wormholes require matter that must itself violate the null energy condition (over a region of nonzero volume) defeating the whole purpose of using a scalar field in the first place.

It is worth noting that although the theorem refers to a \emph{single} scalar field, it can easily be generalised to wormholes populated by an arbitrary number of fields $\{\varphi_1, \varphi_2\ldots \}$, each with a distinct potential $\{V_1, V_2\ldots \}$ and curvature coupling parameter $\{\xi_1, \xi_2\ldots\}$. The proof proceeds exactly as above, the only difference being that the scalar field energy-momentum tensor is replaced by a sum of contributions from each field:
\begin{align}
T^\varphi_{\mu\nu}\to \sum_n T^{\varphi_n}_{\mu\nu}.
\end{align}
This change can be realised in each line of the argument simply by making the replacements $\varphi\to \varphi_n$, $V\to V_n$, $\xi\to\xi_n$, and taking a sum $\sum_n$ over every term that depends on $n$. It is easy to check that these alterations do not interfere with the logic of the proof, so long as they are applied consistently throughout. Consequently, there can be no way to use additional classical scalar fields to provide the additional exotic matter $T^\mathrm{add}_{\mu\nu}$ needed by the wormhole.

Speaking less formally, the theorem also has implications for wormholes that are neither static nor spherically symmetric, casting substantial doubt on the notion that one can use a classical scalar field to support \emph{any} asymptotically well-behaved traversable wormhole. This implication arises because deviations from perfect symmetry typically make it \emph{even harder} to maintain a traversable wormhole: asymmetric modes often give rise to instabilities, and tend to increase the energy of the scalar field, deepening the need for exotic matter. Although a serious attempt to generalise the theorem to generic traversable wormholes is beyond the scope of this article, we suggest that progress might be made by averaging the Einstein field equations over $\{t, \theta, \phi\}$; this will reduce a more generic wormhole (with $A(t,r,\theta,\phi)$, $B(t,r,\theta,\phi)$ and $\varphi(t,r,\theta,\phi)$) to a 1-dimensional system, which should then be susceptible to the techniques used here. It may also be possible to adapt to new terms in the metric (such as those proportional $\ud t \ud \phi$ or $\ud r \ud \phi$, corresponding to `spinning' or `twisted' wormholes) by making more elaborate specifications for the null vector $k^\mu$, beyond simply the longitudinal (\ref{klong}) and circulating (\ref{kcirc}) definitions used above.

The broad view one draws from this work, then, is that classical scalar fields are poor candidates for maintaining traversable wormholes: they can certainly violate the average null energy condition, but it appears they cannot do so \emph{usefully}. Given that this is a failure of a \emph{classical} field, should we infer that only \emph{quantum} systems are capable of providing the sort of exotic matter we need? While this is a tempting conclusion, it is important to recognise that quantum fields have demonstrated similar shortcomings so far: the quantum scalar field violates the ANEC without much effort (and indeed, without reversing the sign of $\kappa_\mathrm{eff}$ anywhere) but it has proven extremely difficult to use this violation for the desired purpose of actually stabilising a wormhole. As a recent example of this, wormholes with very long throats were found to induce in the conformally-coupled quantum scalar field a vacuum state which violates the ANEC along \emph{almost every} null geodesic, but even this could not prevent the wormhole's collapse \cite{Butcher14}. This recurrent frustration suggests that the ANEC may not be the fundamental obstacle to maintaining traversable wormholes, and hints at the existence of some other principle or energy condition, as yet unformulated, which should be used instead. Such a principle would (like the ANEC) demonstrably preclude the existence of traversable wormholes, but (unlike the ANEC) would actually be \emph{obeyed} by both quantum and classical scalar fields, assuming sensible asymptotic behaviour. This principle would explain the invisible obstruction we have encountered in our attempts to use scalar field exotic matter to support traversable wormholes.

Given the critical role played by the energy-density integrals $I$ and $J$ in the proof, it is worth considering whether these constructions might be useful for other arguments, and indeed, whether they have some relevance to the fundamental principle alluded to above. We note that $I$ and $J$ superficially resemble the ANEC integral, in that they are (unnormalised) averages of an energy-density $T_{\mu\nu}k^\mu k^\nu$ with $k^\mu k_\mu =0$; however, in departure from the ANEC, $I$ and $J$ have no obvious connection with \emph{geodesics}. In fact, the integrand of $J$ is strongly suggestive of a \emph{volume} integral $\int\ldots 4 \pi B \ud r$, with a weighting $\sqrt{A}$ that accounts for the redshift (or blueshift) of energy with respect to infinity. Therefore, in contrast to the volume integrals commonly used to quantify the exotic matter required by wormholes \cite{Visser03,Kar04,Lobo05} (which do not include a redshift factor) $J$ more closely resembles a contribution to the \emph{total mass} of the wormhole (i.e. the Komar, Bondi, or ADM masses) with the unusual feature being that it integrates an energy-density defined by null vectors $k^\mu_\mathrm{circ}$ that loop around the throat at fixed values of $r$. On the other hand, $I$ is far more closely related to the ANEC integral: firstly, the energy-density is obtained using the longitudinal null vector $k^\mu_\mathrm{long}$, the geodesics of which cover the full range of integration $-\infty<r<\infty$; secondly, the measure $\ud r/\sqrt{A} B$ includes the factor $1/\sqrt{A}$ that naturally arises in the ANEC by virtue of the geodesic's affine parameterisation. In fact, the only difference between $I$ and the ANEC integral is the unusual factor of $1/B$; it is not clear whether this new feature has an intuitive interpretation.

Regardless of the physical meaning of $I$ and $J$, it is clear that the argument presented in this paper was crucially dependent on our freedom to define energy integrals using null vectors and weightings that did not correspond to affinely parameterised geodesics. With this in mind, it seems likely that there exists a more nuanced measure of exotic matter, whether a volume integral or otherwise, that could supplant the ANEC and allow us to properly discern the dividing-line between the negative energy that nature does abide, and the minimal requirements of traversable wormholes.

\begin{acknowledgments}
The author is supported by a research fellowship from the Royal Commission for the Exhibition of 1851; he thanks John Peacock for his helpful remarks, and the anonymous reviewers for their useful comments.
\end{acknowledgments}

\appendix*

\section{Existence of Particular Sequences}\label{analysis}
What follows is a demonstration of the existence of the sequences described at the end of footnote \ref{boundary}. The $\{R_{+n}\}$ can be identified with whichever sequence (either $\{x_n\}$ or $\{y_n\}$, detailed below) can be continued indefinitely. It is trivial to adapt the argument for $\{R_{-n}\}$.

\begin{Lemma}
For any differentiable function $f: \mathbb{R} \to \mathbb{R}$ that is bounded from below, there is an unbounded sequence $x_1<x_2<\ldots$ with 
\begin{align}
 \lim_{n\to\infty}f'(x_n) = 0,
\end{align}
or an unbounded sequence $y_1<y_2<\ldots$  with 
\begin{align}
 f'(y_n) \ge0\quad \forall n \in \mathbb{N}.
\end{align}
\end{Lemma}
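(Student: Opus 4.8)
The plan is to split into two cases according to whether the derivative $f'$ takes nonnegative values arbitrarily far to the right, i.e., whether the set $S \equiv \{r \in \mathbb{R} : f'(r) \ge 0\}$ is unbounded above. These two cases will furnish the sequence $\{y_n\}$ and the sequence $\{x_n\}$ respectively, and the dichotomy of the Lemma will follow immediately.

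First I would suppose $S$ is unbounded above. Then I can pick any $y_1 \in S$ and inductively choose $y_{n+1} \in S$ with $y_{n+1} > y_n + 1$, which is always possible because $S$ contains elements beyond every bound. This produces an unbounded, strictly increasing sequence with $f'(y_n) \ge 0$ for all $n$, which is precisely the second alternative.

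Otherwise $S$ is bounded above, so there is some $R_0$ with $f'(r) < 0$ for every $r > R_0$. Then $f$ is strictly decreasing on $(R_0, \infty)$ and, being bounded from below, converges to a finite limit as $r \to \infty$. The key step is to show that $f'$ cannot remain bounded away from $0$: if there were $\epsilon > 0$ and $R_1 \ge R_0$ with $f'(r) \le -\epsilon$ for all $r > R_1$, then the fundamental theorem of calculus would give $f(r) \le f(R_1) - \epsilon(r - R_1) \to -\infty$, contradicting the lower bound on $f$. Hence $\limsup_{r\to\infty} f'(r) = 0$ (the $\limsup$ is at most $0$ because $f' < 0$ past $R_0$), and I can therefore extract a strictly increasing, unbounded sequence $x_n$ with $f'(x_n) \to 0$, which is the first alternative.

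The main obstacle, and really the only nontrivial point, is this last estimate: converting a persistent negative bound on $f'$ into the divergence $f \to -\infty$, thereby contradicting boundedness below. Everything else is bookkeeping, namely the choice of the dichotomy on $S$ and the routine extraction of strictly increasing, unbounded subsequences by always selecting the next term beyond both the previous term and a growing threshold. I expect no subtlety beyond taking a little care that the constructed sequences are genuinely monotone and diverge to $+\infty$.
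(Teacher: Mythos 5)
Your proof is correct and turns on exactly the same key estimate as the paper's own proof: a persistent bound $f'(r)\le -\epsilon$ for all sufficiently large $r$ forces $f(r)\to-\infty$, contradicting the lower bound (for a merely differentiable $f$ this step is cleanest via the mean value theorem rather than the fundamental theorem of calculus, an informality the paper itself shares). Your up-front dichotomy on whether $S=\{r: f'(r)\ge 0\}$ is unbounded above is only a cosmetic reorganization of the paper's argument, which instead attempts both greedy constructions simultaneously and derives the same contradiction from their joint failure.
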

\begin{proof}
One can attempt to construct these sequences as follows. For the first series, define $x_0=0$, and set each $x_{n+1}$ to be some point $x_{n+1}> x_n +1$ where $|f'(x_{n+1})|\le 1/|x_n +1|$. For the second series, define $y_0=0$, and set each $y_{n+1}$ to be some point $y_{n+1}> y_n +1$ where $f'(y_{n+1})\ge 0$. The only way that both these methods can fail is if there is some $n$ and some $m$ for which no such $x_{n+1}$ and $y_{m+1}$ exist:
\begin{align}\nonumber
|f'(x)|&> \frac{1}{|x_n +1|}& \forall x &> x_n+1,\\
\text{and}\qquad f'(x)&<0& \forall x &> y_m+1.
\end{align}
This means
\begin{align}
f'(x)&<- K<0,& \forall x &> R,
\end{align}
where $K\equiv 1/|x_n +1|$ and $R\equiv\max\{x_n+1,y_n+1\}$. Consequently, for all $x>R$ we have
\begin{align}\nonumber
f(x)&= f(R)+ \int^x_R f'(r)\ud r\\
&< f(R) -K(x-R),
\end{align}
and hence
\begin{align}
\lim_{x\to \infty} f(x)=-\infty,
\end{align}
which clearly contradicts the assumption that the function is bounded from below. We conclude that at least one of the above methods must continue indefinitely.
\end{proof}

\bibliography{ScalarW}
\end{document}